\documentclass[conference, 10pt, twocolumn]{ieeeconf}

\IEEEoverridecommandlockouts
\overrideIEEEmargins

\usepackage[usenames]{color}
\usepackage[colorlinks=false,urlcolor=blue,citecolor=blue,linkcolor=blue,
bookmarks=true,
     bookmarksopen=false,pdftitle=created by dvipdf,pdfcreator=NM,
     pdfauthor=Megiddo,pdfsubject=mor.sty:6.29.04]{hyperref}
\usepackage{url}
\usepackage{subfigure}
\usepackage{amsfonts,mathrsfs}
\usepackage{amssymb,amsmath}
\usepackage{verbatim}
\usepackage{acronym}
\usepackage{graphicx}
\usepackage{enumerate}



\def\fskip#1{}

\newtheorem{theorem}{Theorem}

\newtheorem{definition}{Definition}

\newtheorem{lemma}{Lemma}

\textwidth 6.3in \textheight 9.1in \setlength{\topmargin}{-0.3in}
\setlength{\oddsidemargin}{0.1in} \setlength{\evensidemargin}{0.1in}

\def\1{{\bf 1}}

\def\e{{\bf e}}

\def\F{\mathscr{F}}

\newcommand{\lra}{\leftrightarrow}

\newcommand{\Wj}{W^{(j)}}

\def\e{\epsilon}

\newcommand{\al}{\alpha}
\def\iot{i_1(t)}
\newcommand{\itt}{i_2(t)}

\def\R{\mathbb{R}}

\newcommand{\EXP}[1]{\mathsf{E}\!\left[#1\right] }
\newcommand{\prob}[1]{\mathsf{Pr}\left( #1 \right)}

\newcommand{\bfo}{\mathbf{1}}

\newcommand{\Ej}{E^{(j)}}

\newcommand{\Ft}{\{\mathcal{F}(t)\}}

\newcommand{\Gj}{G^{(j)}}

\newcommand{\Rm}{\mathbb{R}^m}

\newcommand{\Xj}{X^{(j)}}

\newcommand{\todo}[1]{\vspace{5 mm}\par \noindent \marginpar{\textsc{ToDo}}
\framebox{\begin{minipage}[c]{0.9 \columnwidth} \tt #1
\end{minipage}}\vspace{5 mm}\par}

\begin{document}
\title{A General Framework for Distributed Vote Aggregation}
\author{\authorblockN{Behrouz Touri, Farzad Fardnoud, Angelia Neidi\'c, and Olgica Milenkovic}
  \authorblockA{Coordinated Science Laboratory, University of Illinois at Urbana-Champaign,  Urbana, IL 61801\\
     Email: (touri1,hassanz1,milenkov,angelia)@illinois.edu}}
\maketitle
\begin{abstract}
We present a general model for opinion dynamics in a social network
together with several possibilities for object selections at times when the agents are communicating.
We study the limiting behavior of such a dynamics and show that this dynamics almost surely converges. We consider some special implications of the convergence result for gossip and top-$k$ selective gossip
models. In particular, we provide an answer to the open problem of the convergence property of the top-$k$ selective gossip model, and show that the convergence holds in a much more general setting. Moreover, we propose an extension of the gossip and top-$k$ selective gossip models and provide some results for their limiting behavior.
\end{abstract}

\begin{keywords}
Opinion dynamics, gossip model, top-$k$ selective gossip.
\end{keywords}

\section{Introduction}\label{sec:introduction}
Voting systems have recently gained significant attention due to the emergence of complex online marketing industries, novel forms of political voting systems and opinion pole dynamic aggregation, as well as
many new applications that reach well beyond social choice theory and computer science, such as signal processing, bioinformatics, coding theory, and machine learning~\cite{Mueller1976}, \cite{dwork2001rank-web}, \cite{Cook:1978vn}.
In connection with social network modeling and analysis, one challenging area in the context of voting systems is to model and describe how the opinion of agents about various candidates varies with time,
as agents in a society interact with each other and are exposed to opinion influence by media and their environment.

We describe a novel and broad general model for opinion dynamics analysis of voting systems within a dynamically changing (evolving) society. We investigate the limiting behavior of this dynamics and its relation with the connectivity pattern of the society. The proposed dynamics is based on the gossip
averaging dynamics of~\cite{Boyd06}, but it includes more general notions of ``random connectivity'' in a society, for example,
the possibility of discussing voters opinions. Thus, the application domain of the proposed model is not restricted to opinion dynamics analysis only and, it has connections to several interesting distributed algorithms over time-varying networks. In particular, this work is closely related to the works in the area of distributed averaging and consensus literature \cite{Tsitsiklis84}, \cite{Jadbabaie03}, \cite{Boyd06}, \cite{RabbatCDC:2012} and opinion dynamics \cite{Krause:02}, \cite{Lorenz}. A recent addition to the ever-growing consensus literature also includes a spectral graph theory approach for voting over networks, described in ~\cite{chung2012hypergraph}. There, voter preferences are expressed via vertex colors, and hyperedges are used to denote interacting group of agents or agents subjected to the same kind of external influence. The topology of the hypergraph is fixed.

The structure of this paper is as follows: In Section~\ref{sec:generaldynamics}, we introduce our novel notion of network dynamics,
which we refer to it as \textit{voting diffusion dynamics}, and we discuss some instances of such dynamics which are both of practical and theoretical interest.
Then, in Section~\ref{sec:convergence}, we prove the stability of this dynamics in a general setting and provide a characterization of the limiting point of the dynamics.
In Section~\ref{sec:implications}, we present and study the implications of our result to generalizations of the classical gossip algorithm in~\cite{Boyd06} and the top-$k$ selective gossip algorithm~\cite{RabbatCDC:2012}.
We provide the answer to an open problem regarding the convergence property of top-$k$ selective algorithm that was introduced in~\cite{RabbatCDC:2012}. We in fact provide an answer to the posed question
in a much more general settings of random social connectivity. Conclusions are given in Section~\ref{sec:conclusion}.

 \textbf{Notation}:
For any $n\geq 1$, we denote the set $\{1,\ldots,n\}$ by $[n]$. We say that an $m\times m$ matrix $A$ is stochastic if $\sum_{j=1}^{m}A_{ij}=1$ for all $i\in [n]$ and for all $i,j\in [m]$, we have $A_{ij}\geq 0$.
We use $e$ to denote a vector with all entries equal to 1, where the size of the vector is understood from the context. Throughout the paper, we let $(\Omega,\mathcal{F},\prob{\cdot})$ be the underlying probability space, where $\Omega$ denotes the sample space, $\mathcal{F}$ denotes the $\sigma$-algebra of measurable sets, while $\prob{\cdot}$ stands for the underlying probability measure. We say that $\Ft$ is a filtering if $\mathcal{F}_t$ is a $\sigma$-algebra of $\Omega$ for all $t\geq 0$ and $\mathcal{F}_0\subseteq\mathcal{F}_1\subseteq \cdots$. We denote the conditional expectation by $\EXP{\cdot\mid \cdot}$, and the indicator function
of an event $E$ by $\bfo_E$, i.e.\ $\bfo_E(\omega)=1$ if $\omega\in E$ and $\bfo_E(\omega)=0$ otherwise. We refer to a collection of random variables $W_{ij}$ indexed by $i,j\in [m]$ as a random $m\times m$ matrix.

\section{Dynamic System Viewpoint}\label{sec:generaldynamics}
In this section, we discuss the general setting for our dynamics and introduce the voting diffusion model.
We consider a society of $[m]$ agents that are connected through social ties. Each individual in this society has an opinion about $n$ objects or services of the same kind such as, movies, books, political parties, or services of different dentists in a city. We refer to those objects that are to be ranked as \textit{candidates}. We assume that each individual scores each of the candidates with a real number. Thus, each individual's opinion about the $n$ candidates can be represented by a vector in $\R^n$: the larger the entry, the better the opinion of the agent is about the corresponding candidate. We encode the initial belief (score) of the $m$ individuals
about the $n$ candidates by an $m\times n$ matrix $X(0)$. Thus, $X_{ij}(0)$ represents the opinion of the $i$th agent about the $j$th candidate. We refer to the vector that represents the opinion of the $i$th agent on
all the available candidates, i.e.,\ the $i$th row of $X(0)$, as the \textit{opinion profile} of the $i$th person.

We also refer to $X(0)$ as the opinion profile of the society at time $0$, or simply as the initial opinion profile of the society.

Our interest is in the dynamics of distributed rank influence and aggregation in a society and how one can model diffusion of opinions in a society about different candidates. Here, aggregation refers to the process of assembling individual votes into one vote representative of the whole social opinion~\cite{Mueller1976}. Aggregation is usually performed via scoring methods, plurality counts or using specialized distance measures~\cite{Mueller1976}. As for the aggregation method, there is no common consensus on the best way of combining $m$ votes. Here, we assume that ranking is performed through scoring and hence, we focus on weighted Borda method, for rank aggregation. In Borda's approach,
the aggregate of the $m$ rankings is simply the ordering of the candidates based on their average score within the society. Specifically,
the average score is given by $\bar{X}(0)=\frac{1}{m}e^TX(0)$, and the aggregate of the rankings is
given by the ordering of the average scores, i.e.,\ a permutation $\sigma:[n]\to[n]$ such that $\bar{X}_{\sigma_1}(0)\geq \bar{X}_{\sigma_2}(0)\geq \cdots \ge\bar{X}_{\sigma_n}(0)$.
In other words, $\sigma_1$ is the index of the candidate with the highest aggregate score (vote). We refer to the vector $\bar{X}(0)$ as the \emph{aggregate profile}.

As discussed in \cite{RabbatCDC:2012}, in many occasions, when agents exchange their opinion in a social setting, they would not necessarily exchange their beliefs about every single candidate. As an example, when people in a group discuss movies, even if they recall all the movies they watched, they would not necessary talk about every single one of them. However, even from time to time, they may exchange their beliefs about different number of candidates. For example, one may discuss two movies with a friend at some time, or four movies with some other friend at some other time.

We study the opinion dynamics of $m$ agents about $n$ candidates using a general form of gossiping to model the belief exchange among the agents.
The opinion dynamics in the society is represented by a (random) matrix process $\{X(t)\}$,
where $X_{ij}(t)$ is the opinion of the $i$th agent about the $j$th object.

\subsection{Voting Diffusion Model}
The general dynamics that we analyze has the following properties:
  \begin{enumerate}[(I)]
   \item The dynamics starts with some arbitrary (random) opinion profile $X(0)$.
   \item The dynamics evolves in discrete time, where the time is indexed
   by non-negative integers $t\geq 0$.
   \item We have a random sequence $\{\{i_1(t),i_2(t)\}\}$ of pairs of agents
   (i.e.,\ $i_1(t),i_2(t)\in [m]$ almost surely). We refer to $\{\{\iot,\itt\}\}$ as the \textit{communicating pair process}. Here, we assume that $\{i_1(t),i_2(t)\}$ is measurable with respect to discrete measure on
   the set $\{\{i,j\}\mid i\not=j, i,j\in [m]\}$.

   \item Agents $i_1(t),i_2(t)$ discuss and update their opinion about items in a random (measurable) set $S(t)\subseteq[n]$ of candidates, as follows:
       \begin{align}\label{eqn:dynrule}
         X_{ij}(t+1)=\frac{1}{2}(X_{i_1(t)j}(t)+X_{i_2(t)j}(t))
       \end{align}
       if  $i\in\{i_1(t),i_2(t)\}$ and $j\in S(t)$. Otherwise, $X_{ij}(t+1)=X_{ij}(t)$. By a measurable set $S(t)$, we mean a set that is measurable with respect to the $\sigma$-algebra consisting of all subsets of $[n]$.
 \end{enumerate}
We refer to the preceding dynamical model as the \textit{Voting Diffusion Model}.
Equation~\eqref{eqn:dynrule} models the situation where agents $i_1(t),i_2(t)$ exchange their beliefs about the items in $S(t)$ only and move their opinions to the average of their current opinions. For the rest of the objects, the beliefs are not updated, and this is also true of the rest of the agents.

We assume that $\{S(t)\}$ is an arbitrary sequence of random sets which can depend on the past history (or even future) or some external disturbances (i.e.\ media news).
We refer to this process as the \textit{subject process}. The same holds for the communicating pair process $\{\{\iot,\itt\}\}$.
We refer to any process $\{X(t)\}$ that is generated using some communicating pair process and a subject process as a dynamics generated by a \textit{voting diffusion model}.

  \subsection{Some Examples of Voting Diffusion Process}\label{subsec:examples}
 Before analyzing the voting diffusion model, let us discuss some interesting instances of such process.
 \begin{itemize}
  \item \textit{Gossip Model}: The asynchronous gossip algorithm discussed in \cite{Boyd06} is the special case of the above dynamics with the following choice of processes:
      \begin{enumerate}
        \item The communicating pair process $\{\iot,\itt\}$ is an i.i.d.\ process.
        \item The number of candidates $n$ equals one, and hence, we naturally have $S(t)=\{1\}$ for all $t\geq 0$.
      \end{enumerate}
  \item \textit{Top-$k$ Selective Gossiping:} The top-$k$ selective gossiping model that was proposed and analyzed in \cite{RabbatCDC:2012} is a special case of the above dynamics with the following choice of the processes:
      \begin{enumerate}
        \item The communicating pair process $\{\{\iot,\itt\}\}$ is the same
        as in the gossip model of \cite{Boyd06}.
        \item Agent $i_1(t)$ and $i_2(t)$ discuss only the top-$k$ ranked objects.
        To describe it more precisely, for a vector $v\in \R^n$ and an integer $k\in [n]$,
        let $T_k(v)$ be the set of indices of $v$ corresponding to the top-$k$ positions in the ranking of entries of $v$, i.e.\ if $\sigma$ is a permutation on $[n]$ with $v_{\sigma(1)}\geq v_{\sigma(2)}\geq \cdots \geq v_{\sigma(n)}$, then
            \begin{align}\label{eqn:Tfuncdef}
              T_k(v)=\{j\in [n]\mid v_j\geq v_{\sigma(k)}\}.
            \end{align}
             Based on the definition of $T_k(\cdot)$, at time $t$, agents $i_1(t)$ and $i_2(t)$ only discuss voters in the set $S(t)=T_k(X_{i_1(t)}(t))\cup T_k(X_{i_1(t)}(t))$.
        This model is referred to as top-$k$ selective gossiping model.
      \end{enumerate}
  \item \textit{Binomial Selection:} In this case, we have an arbitrary communicating pair process $\{\{\iot,\itt\}\}$, while the subject process $\{S(t)\}$ is based on binomial object selection. Specifically, at time $t$,
   the set $S(t)\subseteq[n]$ consists of candidates that are obtained by choosing each candidate $j\in[n]$
   randomly and  independently with some probability $p\in(0,1]$.
  \item \textit{Hegselmann-Krause Gossiping:} On political issues, quite often, agents opinion are influenced by agents whose opinions are close to their own opinion. Motivated by the work in \cite{Krause:02}, we model such a dynamics as follows: let $i_1(t)$ and $i_2(t)$ be the agents that are chosen for vote updating at time $t$, then:
      \[S(t)=\{j\mid |X_{i_1(t)j}-X_{i_2(t)j}|\leq \epsilon\},\]
      where $\epsilon>0$ is a fixed parameter. In other words, at time $t$, agents $i_1(t)$ and $i_2(t)$ only discuss candidates for which their opinions are within $\e$-distance.
\end{itemize}
\section{Convergence Analysis}\label{sec:convergence}
In this section, we provide the convergence analysis of the voting diffusion dynamics. The main claim in this section is the following result.
\begin{theorem}\label{thrm:mainconvresult}
  For any communicating pair process $\{\{i_1(t),i_2(t)\}\}$ and a subject process $\{S(t)\}$, the dynamics $\{X(t)\}$ is convergent almost surely.
\end{theorem}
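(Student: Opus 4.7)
\medskip
\noindent\textbf{Step 1 (column decomposition).} The overall plan is to split the matrix dynamics into $n$ independent per-column dynamics, obtain convergence of the multiset of entries in each column via a convexity-based Lyapunov functional, and finally upgrade multiset convergence to entrywise convergence by a cluster-stabilisation argument; since every step is pathwise, almost-sure convergence of $\{X(t)\}$ is immediate. Concretely, from~\eqref{eqn:dynrule} the $j$-th column $X^{(j)}(t):=(X_{1j}(t),\dots,X_{mj}(t))^T$ evolves on its own as $X^{(j)}(t+1)=W^{(j)}(t)X^{(j)}(t)$, where $W^{(j)}(t)$ equals the identity if $j\notin S(t)$ and the symmetric pairwise-averaging matrix associated to $(\iot,\itt)$ if $j\in S(t)$. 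In either case $W^{(j)}(t)$ is doubly stochastic, so it suffices to prove pathwise convergence of an arbitrary fixed column $y(t):=X^{(j)}(t)$ in $\R^m$.

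\medskip
\noindent\textbf{Step 2 (Lyapunov / moment convergence of one column).} Double stochasticity yields that $\bar y:=\tfrac1m\mathbf{1}^Ty(t)$ is conserved and that $\min_iy_i(t),\max_iy_i(t)$ are respectively non-decreasing and non-increasing, so $y(t)$ stays in the compact interval $[A,B]$ with $A:=\min_iy_i(0)$ and $B:=\max_iy_i(0)$. For any convex $\phi$ on $[A,B]$, $\Phi_\phi(t):=\sum_i\phi(y_i(t))$ is non-increasing: the identity step preserves it, while a gossip step at $(\iot,\itt)$ decreases it by the Jensen gap $\phi(y_{\iot})+\phi(y_{\itt})-2\phi((y_{\iot}+y_{\itt})/2)\ge 0$. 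Applying this with $\phi(y)=(y-A)^k$ for $k=1,\dots,m$ makes every power sum of the shifted entries a bounded monotone, hence convergent, sequence. Newton's identities then imply convergence of every elementary symmetric polynomial in $y_1(t),\dots,y_m(t)$, so the monic polynomial $\prod_i(z-y_i(t))$ converges coefficient-wise and the multiset $\{y_1(t),\dots,y_m(t)\}$ converges to some limit multiset $\{v_1,\dots,v_m\}$.

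\medskip
\noindent\textbf{Step 3 (entrywise convergence and main obstacle).} The delicate part, and the one I expect to be the main obstacle, is promoting multiset convergence to convergence of each individual coordinate $y_i(t)$, since a priori the values could keep being permuted across agents. Let $u_1<\dots<u_r$ be the distinct values in $\{v_1,\dots,v_m\}$ (if $r=1$ the column is already converging to consensus) and pick $0<\delta<\tfrac13\min_{\ell<r}(u_{\ell+1}-u_\ell)$. For all sufficiently large $t$, each $y_i(t)$ lies within $\delta$ of a unique $u_{L(t,i)}$, and the cluster sizes match the multiplicities of the $u_\ell$'s in the limit. A gossip step pairing indices with distinct labels $a\neq b$ replaces the two entries by values within $\delta$ of the midpoint $(u_a+u_b)/2$, which is at distance more than $\delta$ from both $u_a$ and $u_b$ by the choice of $\delta$; hence the cluster sizes near $u_a$ and $u_b$ would each drop by one, contradicting the multiset limit one step later. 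Therefore only intra-cluster averages occur for $t$ large, the assignment $L(t,\cdot)$ eventually stabilises to some $L$, and each $y_i(t)\to u_{L(i)}$, completing the proof.
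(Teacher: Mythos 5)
Your proof is correct, but it takes a genuinely different route from the paper. The paper performs the same column decomposition into $X^{(j)}(t+1)=W^{(j)}(t)X^{(j)}(t)$ with $W^{(j)}(t)$ doubly stochastic and $W^{(j)}_{ii}(t)\geq \tfrac12$, and then finishes in one line by invoking an external result (Theorem~5 of \cite{TouriNedich:Approx}, restated as Theorem~\ref{TouriNedich:Approx}) on backward products of doubly stochastic matrices with uniformly positive diagonals. You instead give a self-contained, elementary, pathwise argument: monotonicity of $\sum_i\phi(y_i(t))$ for convex $\phi$ (a majorization fact), applied to the power sums $\sum_i(y_i(t)-A)^k$ for $k\le m$, yields via Newton's identities convergence of the multiset of column entries, and a cluster-stabilisation argument then upgrades this to entrywise convergence. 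What the paper's route buys is brevity plus the infinite-flow-graph characterization of the limit (Theorem~\ref{thrm:infiniteflow}) for free from the same citation, and it works for arbitrary doubly stochastic updates; what your route buys is a fully elementary proof that exploits the specific pairwise-averaging structure and gives a concrete picture of the limit as stabilized clusters (which would in fact also recover the consensus-on-connected-components statement, since inter-cluster averaging must eventually cease). One minor quantitative slip: with $\delta<\tfrac13\min_\ell(u_{\ell+1}-u_\ell)$, the midpoint-based new value is only guaranteed to be at distance $>\delta/2$ (not $>\delta$) from $u_a$ and $u_b$, so your literal claim that both cluster counts drop needs $\delta<\tfrac14\min_\ell(u_{\ell+1}-u_\ell)$; alternatively, since the two equal new entries cannot lie within $\delta$ of both $u_a$ and $u_b$ (these neighborhoods are disjoint), at least one cluster count changes and the contradiction with the multiset limit still obtains, so the argument survives either way.
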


 To prove this result, for any candidate $j\in [n]$, let us define the $m\times m$ matrix process $\{\Wj(t)\}$ as follows:
{\small
 \begin{align}\label{eqn:Wjt}
   \Wj(t)= I-
   \bfo_{j\in S(t)}\frac{1}{2}(e_{i_1(t)}-e_{i_2(t)})(e_{i_1(t)}-e_{i_2(t)})^T.
 \end{align}}
It can be seen that if $S(t)$ and $\{i_1(t),i_2(t)\}$ are measurable, then $\Wj_{i\ell}(t)$ is a measurable random variable for any $i\in [m]$, $j\in [n]$ and $t\geq 0$.
Regardless whether $j\in S(t)$ or $j\not\in S(t)$, the matrix $\Wj(t)$ is (surely) doubly stochastic.
We use this property extensively in the following discussion.

From the definition of the process $\{\Wj(t)\}$ in~\eqref{eqn:Wjt} and the voting diffusion dynamics \eqref{eqn:dynrule}, it can be seen that for an arbitrary candidate $j\in [n]$,
  \begin{align}\label{eqn:jdynamics}
    \Xj(t+1)=\Wj(t)\Xj(t),\quad \mbox{for all $t\geq 0$},
  \end{align}
  where $\Xj(t)$ is the $j$th column of $X(t)$. From this, and the fact that the matrices $\Wj(t)$s are doubly stochastic almost surely,
  it immediately follows that the average opinion of the society on a particular candidate is preserved throughout the dynamics. In particular, we have
  \begin{align}\nonumber
 \frac{1}{m}\sum_{i=1}^mX_{ij}(t+1)&=\frac{1}{m}e^TX^{(j)}(t+1)\cr
  &=\frac{1}{m}e^T\Wj(t)\Xj(t)\cr
  &=\frac{1}{m}e^T\Xj(t)=\frac{1}{m}\sum_{i=1}^{m}X_{ij}(t).
  \end{align}

 In our subsequent derivations, we make extensive use of the following result, which is an immediate consequence of Theorem~5 in \cite{TouriNedich:Approx}.
 \begin{theorem}\label{TouriNedich:Approx}
   Let $\{A(t)\}$ be a sequence of $m\times m$ doubly stochastic matrices, and $A_{ii}(t)\geq \delta>0$ for all $t\geq 0$ and $i\in [m]$ and some $\delta>0$. Then, for any initial condition $x(0)\in \Rm$,
   the limit $\lim_{t\to\infty}x(t)$ exists. Furthermore, if we let the infinite flow graph of $\{A(t)\}$ be the graph $G=([m],E)$ with
   \[E=\{\{i_1,i_2\}\mid i_1,i_2\in [m], \sum_{t=0}^{\infty}A_{i_1i_2}(t)=\infty\},\]
   then for any $i_1,i_2$ belonging to the same connected component of $G$, we have
    \[\lim_{t\to\infty}(x_{i_1}(t)-x_{i_2}(t))=0.\]
 \end{theorem}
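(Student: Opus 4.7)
The plan is to follow the route leading to Theorem~5 of \cite{TouriNedich:Approx}: a quadratic Lyapunov argument yields a summability bound, the majorization property of doubly stochastic matrices yields convergence of each coordinate, and these two facts together with the infinite flow condition force agreement within every connected component of $G$. I organize the argument in three steps.

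\emph{Step 1 (Lyapunov decrease and summability).} Since each $A(t)$ is doubly stochastic, the average $\bar{x} := \frac{1}{m} e^T x(0)$ is preserved, so $V(t) := \|x(t) - \bar{x} e\|^2$ is well defined and satisfies $V(t+1) \leq V(t)$. The standard row-variance identity $\|x\|^2 - \|Ax\|^2 = \frac{1}{2}\sum_{i,j,k} A_{ij} A_{ik} (x_j - x_k)^2$, valid for any doubly stochastic $A$, restricted to the terms with $k=i$ (and $j=i$) and combined with the hypothesis $A_{ii}(t) \geq \delta$, gives
\[
V(t) - V(t+1) \;\geq\; \delta \sum_{i,j} A_{ij}(t)\, (x_i(t) - x_j(t))^2.
\]
Telescoping in $t$ and using $V(t)\geq 0$ yields the key bound $\sum_{t=0}^{\infty} \sum_{i,j} A_{ij}(t) (x_i(t) - x_j(t))^2 \leq V(0)/\delta < \infty$. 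A parallel Cauchy--Schwarz estimate, again using $A_{ii}(t) \geq \delta$, bounds $(x_i(t+1)-x_i(t))^2$ by a constant multiple of the $i$-th row variance, so $\sum_t (x_i(t+1) - x_i(t))^2 < \infty$; in particular the ``slow variation'' property $x_i(t+1) - x_i(t) \to 0$ holds for every $i$.

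\emph{Step 2 (Convergence of each coordinate).} This is the step I expect to be the main obstacle, since square-summability and slow variation alone do not pin down limit values. I appeal to the Hardy--Littlewood--Polya majorization property of doubly stochastic matrices: $A(t)x(t)$ is majorized by $x(t)$, so the partial sums $U_k(t) := \sum_{\ell=1}^{k} x_{(\ell)}(t)$ of the entries of $x(t)$ arranged in decreasing order are non-increasing in $t$, bounded below by $k\min_i x_i(0)$, and therefore convergent. Taking differences, every sorted value $x_{(\ell)}(t)$ converges to a limit $v_\ell$. I then invoke slow variation once more: once $t$ is large enough that $|x_i(t+1) - x_i(t)|$ is smaller than half the minimum positive gap between consecutive distinct $v_\ell$'s, $x_i(t)$ cannot migrate between distinct limit levels, so it must commit to a single one. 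Coordinates whose limit level is a repeated value trivially converge to that common value. Hence $x_i(t) \to x_i^\infty$ for every $i$.

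\emph{Step 3 (Agreement on components of $G$).} For any edge $\{i_1, i_2\} \in E$, the summability bound from Step~1 and $\sum_t A_{i_1 i_2}(t) = \infty$ together force $x_{i_1}^\infty = x_{i_2}^\infty$: otherwise $(x_{i_1}(t) - x_{i_2}(t))^2$ is bounded below by a positive constant for all large $t$, and the partial series $\sum_t A_{i_1 i_2}(t) (x_{i_1}(t) - x_{i_2}(t))^2$ would diverge, contradicting Step~1. Chaining this equality along any path in a connected component of $G$ yields the claimed equality $\lim_t (x_{i_1}(t) - x_{i_2}(t)) = 0$ throughout each component.
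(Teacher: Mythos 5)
Your proof is correct, but it takes a genuinely different route from the paper. The paper does not prove the statement from first principles at all: it obtains it as a one-line specialization of Theorem~5 of \cite{TouriNedich:Approx} (a result on random doubly stochastic chains), by taking the trivial probability space and the uniform vector $\pi=\frac{1}{m}e$. You instead give a self-contained elementary argument: the quadratic Lyapunov identity for doubly stochastic matrices plus the diagonal bound $A_{ii}(t)\geq\delta$ yields summability of $\sum_{t}\sum_{i,j}A_{ij}(t)(x_i(t)-x_j(t))^2$ and the slow-variation property; the Hardy--Littlewood--Polya majorization $A(t)x(t)\prec x(t)$ makes the sorted partial sums monotone, so the order statistics converge; slow variation then pins each coordinate to a single limit level; and the summability bound combined with $\sum_t A_{i_1i_2}(t)=\infty$ forces agreement across every edge of $G$, hence across each connected component. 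All the individual steps check out --- the variance identity, the Cauchy--Schwarz bound on $x_i(t+1)-x_i(t)$, the ``no migration between separated limit levels'' argument, and the contradiction in Step~3 are each sound, and Step~2 correctly supplies the convergence of individual coordinates that square-summability alone cannot give (permutation matrices show the diagonal bound is genuinely needed there). What the two approaches buy: the paper's reduction is essentially free given the external machinery, and inherits the stronger stochastic version of the result; yours is longer but transparent and self-contained, and makes explicit exactly where double stochasticity (majorization, average preservation) and the uniform diagonal bound $\delta$ enter. One cosmetic remark: in Step~1 the Cauchy--Schwarz inequality itself does not use $A_{ii}(t)\geq\delta$; the hypothesis is only needed for the summability of the resulting bound, so the phrase ``again using $A_{ii}(t)\geq\delta$'' is slightly misplaced but harmless.
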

 \begin{proof}
   The result follows immediately by considering the trivial probability model, i.e.\ $\mathcal{F}=\{\emptyset,\Omega\}$ and the natural process $W(t,\omega)=A(t)$ for all $\omega\in \Omega$ and letting $\pi=\frac{1}{m}e$ in Theorem 5 in \cite{TouriNedich:Approx}.
 \end{proof}

  With this, proof of Theorem~\ref{thrm:mainconvresult} follows immediately.
  \begin{proof}(Proof of Theorem \ref{thrm:mainconvresult}) As discussed earlier,
  for any candidate $j\in [n]$, the sequence $\{\Wj(t)\}$ is almost surely doubly stochastic.
  Also, the dynamics governing the $j$th column of $X(t)$ is governed by equations \eqref{eqn:jdynamics}. Furthermore,
  $\Wj_{ii}(t)\geq \frac{1}{2}$, almost surely. Thus, by Theorem~\ref{TouriNedich:Approx}, $\lim_{t\to\infty}\Xj(t)$ exists almost surely.
  Since we have finitely many $j\in [n]$, it follows immediately that $\lim_{t\to\infty}X(t)$ exists.
  \end{proof}

By virtue of Theorem~\ref{thrm:mainconvresult}, we know that the limit
 $\lim_{t\to\infty}X(t)$ exists. We denote this limiting random matrix by
 $X(\infty)=\lim_{t\to\infty}X(t)$.

\subsection{Limiting Points of the Dynamics}
So far, we have proved that the aforementioned general information diffusion dynamics converges almost surely, independently of the choice of the communicating pair and the subject processes. Here, we study the convergent point of the dynamics. In particular, we are interested in determining agents which will eventually have the same opinion about a given candidate $j\in [n]$.
Before stating any result in this direction, let us define the concept of \textit{consensus} on a specific candidate.
   \begin{definition}\label{def:consensuspair}
     For a dynamics $\{X(t)\}$ generated by the voting diffusion model, for any two agents $i_1,i_2\in [m]$ and any candidate $j\in [m]$, we define the event $i_1\lra_j i_2$ as follows:
     \[i_1\lra_j i_2\triangleq\{\omega\in \Omega\mid \lim_{t\to\infty}(X_{i_1j}(t)-X_{i_2j}(t))=0\},\]
     or in other words, $i_1\lra_j i_2$ consists of the sample points over which agents $i_1$ and $i_2$ eventually consent on candidate $j$.
   \end{definition}

   Note that certain properties hold for events $i_1\lra_j i_2$. For example, for any triple $i_1,i_2,i_3\in [m]$, we have:
   \[(i_1\lra_j i_2) \cap (i_2\lra_j i_3)\subseteq (i_1\lra_j i_3),\]
   which follows immediately from the definition of these events.

In the upcoming discussion, we characterize the points in events $i_1\lra_j i_2$, based on the choice process and subject process for the diffusion model. For this, let us fix a choice process
$\{\{\iot,\itt\}\}$ and a subject process $\{S(t)\}$, and an initial profile $X(0)$. We associate with a candidate $j\in [n]$ a random graph $\Gj=([m],\Ej)$, where the edges of the graph specify the
set of agents which discuss item $j$ infinitely often (and of course, they themselves should appear in the communicating pair process infinitely often). More precisely,
   \begin{align}\label{eqn:Ejdef}
     \Ej&=\{\{i_1,i_2\}\mid i_1,i_2\in [m],\cr
     &\sum_{t=0}^{\infty}\bfo_{\{\{i_1(t),i_2(t)\}=\{i_1,i_2\} \} }
      \cdot \bfo_{\{j\in S(t)\} } =\infty\}.\quad
   \end{align}
Note that for any fixed $\{i_1,i_2\}$, the set $\{i_1,i_2\}\in \Ej$ is an event in our $\sigma$-field.
Thus, one can talk about the connectivity event. For this, let us define the event $i_1\lra_{\Gj}i_2$
as the event that agents $i_1$ and $i_2$ fall in the same connected component of $\Gj$.
   \begin{theorem}\label{thrm:infiniteflow}
     For any pair $\{i_1,i_2\}$ and any candidate $j\in[m]$, we have:
     \[i_1\lra_{\Gj}i_2\subseteq i_1\lra_j i_2.\]
   \end{theorem}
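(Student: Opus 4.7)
The plan is to reduce the claim to the deterministic Theorem~\ref{TouriNedich:Approx} applied pathwise to the sequence $\{\Wj(t)\}$. The key observation is that the ``infinite flow graph'' appearing in Theorem~\ref{TouriNedich:Approx}, when applied to a realization of $\{\Wj(t)\}$, is exactly $\Gj$.

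First, I would unpack the off-diagonal entries of $\Wj(t)$. From~\eqref{eqn:Wjt}, for $p\neq q$ the entry $\Wj_{pq}(t)(\omega)$ equals $\tfrac{1}{2}$ precisely when both $\{p,q\}=\{\iot(\omega),\itt(\omega)\}$ and $j\in S(t)(\omega)$ hold, and equals $0$ otherwise. Summing over $t$ and comparing with the definition in~\eqref{eqn:Ejdef} gives
\[\sum_{t=0}^{\infty} \Wj_{pq}(t)(\omega)=\infty \iff \{p,q\}\in \Ej(\omega).\]
Consequently, the infinite flow graph of the deterministic sequence $\{\Wj(t)(\omega)\}$ coincides with $\Gj(\omega)$ as a graph on $[m]$.

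Second, I would fix an $\omega$ belonging to the event $i_1\lra_{\Gj}i_2$. As already noted after~\eqref{eqn:Wjt}, each $\Wj(t)$ is (surely) doubly stochastic, and by construction $\Wj_{ii}(t)\geq \tfrac{1}{2}$ for every $i\in[m]$. Thus the hypotheses of Theorem~\ref{TouriNedich:Approx} are satisfied by $\{\Wj(t)(\omega)\}$ with $\delta=\tfrac{1}{2}$. Using the per-column recursion~\eqref{eqn:jdynamics} and the characterization of the infinite flow graph from the previous step, Theorem~\ref{TouriNedich:Approx} yields
\[\lim_{t\to\infty}\bigl(X^{(j)}_{i_1}(t)(\omega)-X^{(j)}_{i_2}(t)(\omega)\bigr)=0,\]
because $i_1$ and $i_2$ lie in the same connected component of $\Gj(\omega)$. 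This places $\omega$ in $i_1\lra_j i_2$ and establishes the claimed inclusion.

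I do not anticipate a genuine obstacle: the entire argument is a sample-path application of the cited deterministic result, together with the explicit matching between the infinite flow graph and $\Gj$. The only minor bookkeeping is that the doubly stochastic property and the diagonal bound for $\Wj(t)$ hold surely (not merely almost surely), so no exceptional null set needs to be excised and the set-theoretic inclusion holds as stated.
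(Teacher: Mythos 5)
Your proof is correct and follows essentially the same route as the paper: both identify, sample path by sample path, the infinite flow graph of $\{\Wj(t)(\omega)\}$ with $\Gj(\omega)$ via the observation that $\Wj_{i_1i_2}(t)=\tfrac12$ exactly when $\{i_1,i_2\}=\{i_1(t),i_2(t)\}$ and $j\in S(t)$, and then invoke Theorem~\ref{TouriNedich:Approx} on the column dynamics~\eqref{eqn:jdynamics}. Your write-up is slightly more explicit about verifying the hypotheses (double stochasticity and the diagonal bound $\delta=\tfrac12$), but the argument is the same.
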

   \begin{proof}
 We claim that $\{i_1,i_2\}\in \Ej$ if and only if $\sum_{t=0}^{\infty}\Wj_{i_1i_2}(t)=\infty$.
 This can be seen by noting that $\Wj_{i_1i_2}(t)=\frac{1}{2}$ if and only if
$\{i_1,i_2\}=\{i_1(t),i_2(t)\}$ and $j\in S(t)$ at time $t$. Thus, for any sample point $\omega\in \Omega$, the graph $\Gj(\omega)$ is the infinite flow graph of the process $\{\Wj(t)\}$.
 Then, by Theorem \ref{TouriNedich:Approx}, we infer the claimed result.
   \end{proof}
 \section{Implications}\label{sec:implications}
  In this section, we discuss some implications of Theorem~\ref{thrm:mainconvresult} and Theorem~\ref{thrm:infiniteflow}.

Note that all the examples considered in Subsection~\ref{subsec:examples}
are examples of the voting diffusion model. Hence, by Theorem~\ref{thrm:mainconvresult}, it immediately follows that the dynamics generated by any of the given four models converges almost surely. In what follows, we present more detailed results on the extensions of the gossip model and the top-$k$ selective gossip model.

  \subsection{Gossip Model}
   Theorem~\ref{thrm:mainconvresult} asserts that convergence happens
   for any  i.i.d.\ choice of a communicating pairs process
   $\{\iot,\itt\}\}$. It also shows that the dynamics generated
   by the gossip model converges (almost surely) for an arbitrary communicating pair process $\{\{\iot,\itt\}\}$.  Specifically, Theorem~\ref{thrm:mainconvresult} shows that
   such a dynamics converges  even when $\{\{\iot,\itt\}\}$ is not i.i.d., as well as when it is not adapted to any filtering.

   Let us consider the case that the communicating pair process $\{\{\iot,\itt\}\}$ is adapted to a filtration $\{\F_t\}$ and $X(0)$ is measurable with respect to $\F_0$.
   We refer to this model as the adapted gossiping model. Then, it follows that $X(t)$ is measurable with respect to $\F_{t+1}$ and, hence, it is adapted to $\{\F_{t+1}\}$.

   Note that for the case of gossip model, we have only one object to discuss and, consequently, we can talk about the infinite flow graph of the model. Thus, in this case we may drop the superscripts $j$ for the infinite flow graph, as well as for the process $\{W(t)\}$.
   \begin{lemma}
     The infinite flow graph $G=([m],E)$ of the gossip model can be characterized as follows:
     \begin{align}\label{eqn:gossippattern}
       &E=\{\{i_1,i_2\}\mid \cr
     &\quad\sum_{t=0}^{\infty}\prob{\{i_1(t+1),i_2(t+1)\}=\{i_1,i_2\}\mid \F_t}=\infty\}.\quad
     \end{align}
   \end{lemma}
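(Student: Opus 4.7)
The plan is to derive this characterization from the conditional (L\'evy) extension of the second Borel--Cantelli lemma. For a fixed unordered pair $\{i_1,i_2\}$, let $A_t := \{\{i_1(t+1), i_2(t+1)\} = \{i_1,i_2\}\}$. In the gossip model we have $n = 1$ and $S(t) = \{1\}$, so the indicator $\bfo_{\{j\in S(t)\}}$ in~\eqref{eqn:Ejdef} is identically $1$, and after a harmless one-step index shift the edge condition reduces to $\{i_1,i_2\} \in E \iff \sum_{t=0}^\infty \bfo_{A_t} = \infty$; that is, $\{i_1,i_2\}$ lies in $E$ precisely when it is selected as the communicating pair infinitely often.

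Since the communicating pair process is adapted to $\{\F_t\}$, each $A_t$ lies in $\F_{t+1}$, so the conditional Borel--Cantelli lemma applies and yields the almost sure identity
\[\{\sum_{t=0}^\infty \bfo_{A_t} = \infty\} = \{\sum_{t=0}^\infty \prob{A_t \mid \F_t} = \infty\}.\]
I would justify this by introducing the martingale $M_T := \sum_{t=0}^{T-1}(\bfo_{A_t} - \prob{A_t \mid \F_t})$, whose increments are uniformly bounded by $1$. Doob's martingale convergence theorem together with Kronecker's lemma implies that on the event $\{\sum \prob{A_t\mid\F_t}=\infty\}$ the ratio $M_T/\sum_{t<T}\prob{A_t\mid\F_t}$ tends to zero almost surely, forcing both series to diverge together; a symmetric argument on $\{\sum\prob{A_t\mid\F_t}<\infty\}$ handles the convergent case and gives the reverse inclusion.

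Finally, since there are only $\binom{m}{2}$ unordered pairs in $[m]$, a union bound over these finitely many pairs shows that the exceptional null sets produced by the above identity aggregate to a single null set. Outside this null set, the random edge set $E$ of the infinite flow graph coincides with the set characterized in~\eqref{eqn:gossippattern}, which is the claim.

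The only delicate point is invoking the conditional Borel--Cantelli lemma correctly; this is a classical martingale-theoretic result, and the hypothesis it requires (namely, $\{\F_t\}$-adaptedness of the communicating pair process) is exactly the hypothesis of the adapted gossip model. I therefore do not anticipate any real obstacle, and the remaining work is bookkeeping: the index shift between time $t$ and time $t+1$ and the finite union over pairs.
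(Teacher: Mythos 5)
Your proof is correct and takes essentially the same route as the paper: both reduce membership of $\{i_1,i_2\}$ in the infinite flow graph to the event that this pair is selected to communicate infinitely often (trivially so, since $n=1$ forces $S(t)=\{1\}$), and then apply the conditional (L\'evy) Borel--Cantelli lemma to identify that event, up to a null set, with divergence of the conditional probabilities. The only difference is cosmetic: the paper cites the conditional Borel--Cantelli lemma (Theorem 5.3.2 of Durrett) outright, while you sketch its standard bounded-increment martingale proof.
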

   \begin{proof}
     By the construction of the process $\{W(t)\}$, we have
     $\sum_{i}\sum_{j}W_{ij}(t)=\infty$, if and only if $\{\iot,\itt\}=\{i_1,i_2\}$ infinitely often.
     By the Borel-Cantelli lemma for conditional expectation (Theorem 5.3.2., \cite{Durrett}), it follows that $\{i_1,i_2\}\in E$ if and only if $\sum_{t=0}^{\infty}\prob{\{\iot,\itt\}=\{i_1,i_2\}\mid \F_t}=\infty$.
   \end{proof}

 By the preceding lemma and Theorem~\ref{thrm:infiniteflow}, for a given sample point, it suffices to consider the connectivity pattern
in the graph $([m],E)$ with edge set $E$ given in~\eqref{eqn:gossippattern}.
 Then, any two agents that fall in the same connected component of this graph
 will eventually consent on a common value.

 \subsection{Top-$k$ Selective Gossiping}
 In \cite{RabbatCDC:2012}, it is shown that for a given i.i.d.\ communicating pair process, the distance of the expected value of the dynamics $\{X(t)\}$ from its mean value is convergent,
 and the convergence property of the dynamics is left as an open problem.
 Theorem~\ref{thrm:mainconvresult} asserts a much more general result, as it shows that for an arbitrary communicating pair process $\{\{\iot,\itt\}\}$
 (not necessarily i.i.d.), the dynamics $\{X(t)\}$ is convergent almost surely.
 The point here is that Theorem~\ref{thrm:mainconvresult} makes no assumption on
 how the communicating pair process is constructed, whether it follows the standard gossip model or not,
 whether the underlying communication network is static or not.

Now, we discuss the top-$k$ selective gossiping in a more general setting
where $\{\{\iot,\itt\}\}$ is an arbitrary communicating pair process. We refer to it as the generalized top-$k$ selective gossiping model.
Throughout this section, we let the connectivity graph $G=([m],E)$ be
associated to a communicating pair process $\{\{\iot,\itt\}\}$, which is given by
  \begin{align}\label{eqn:infiniteoften}
  E=\{\{i,j\}\mid \{\iot,\itt\}=\{i_1,i_2\}\mbox{ i.o.}\},\end{align}
  where i.o.\ stands for infinitely often.

 As discussed in \cite{RabbatCDC:2012}, in the top-$k$ selective gossiping, the concern is
 whether the top-$k$ gossiping scheme can reach consensus on the top-$k$ candidates in the aggregate ranking of the initial opinion of the society, i.e., $\bar{X}(0)=\frac{1}{m}e^TX(0)$.
 In other words, the question is: do we have $i_1\lra_ji_2$ for any $i_1,i_2\in [m]$ and any $j\in T_k(\bar{X}(0))$? Our main shows that on the event set that $G$ is connected, there is a $k'$ such that agents consent on the top-$k'$ list using the generalized top-$k$ selective algorithm.

  Throughout the rest of our discussion, we work on a sample path of our dynamics for which $G$ is connected and hence, we assume that we have a deterministic sequence of $\{\iot,\itt\}$ that its connectivity graph is connected.

  For $\{i_1,i_2\}\in E$, let us define the following notations which will be useful
  in the upcoming development.
  \begin{itemize}
    \item We let $\beta_{i_1i_2}(s)$ be the $s$'th time instance when $\{i_1(t),i_2(t)\}=\{i_1,i_2\}$ occurs.
     Since $\{i_1,i_2\}\in E$, we have that $\{\beta_{i_1i_2}(s)\}$
     is an increasing sequence that goes to infinity.
    \item We let $S_{i_1i_2}^{\infty}$ be the set of candidates that appear infinitely often in
    the discussions between $i_1$ and $i_2$, i.e.,
    \begin{align}\nonumber
     &S_{i_1i_2}^{\infty}=\cap_{t\geq 0}\cup_{s\geq t}S(\beta_{i_1i_2}(s))\cr
      &=\cap_{t\geq 0}\cup_{s\geq t}(T_k(X_{i_1}(\beta_{i_1i_2}(s)))\cup T_k(X_{i_2}(\beta_{i_1i_2}(s)))).
    \end{align}
    \item We let $\al_{i_1i_2}=\min_{j\in S^\infty_{i_1i_2} }X_{i_1j}(\infty)$.
  \end{itemize}
  Note that $|S^{\infty}_{i_1i_2}|\geq k$ since each $S(\beta_{i_1i_2}(s))$ has cardinality at least $k$ and we have finitely many such subsets (at least one of these subsets should appear in
  the sequence  $\{S(\beta_{i_1i_2}(s))\}$ infinitely often).
  Also, note that $\al_{i_1i_2}$ may not appear to be well-defined but will subsequently be shown to hold true.

  The following lemma will assist us in proving the main result.
  \begin{lemma}\label{lemma:prelim}
    For $i_1,i_2\in [m]$ with $\{i_1,i_2\}\in E$, the following statements hold:
    \begin{enumerate}[(a)]
      \item \label{item:consentonlist} We have $X_{i_1j}(\infty)=X_{i_2j}(\infty)$ for all $j\in S_{i_1i_2}^{\infty}$.
      \item\label{item:part1} For any $j\not\in S_{i_1i_2}^{\infty}$ and any $i\in [m]$, we have
    \[X_{ij}(\infty)\leq \alpha_{i_1i_2}.\]
    In other words, eventually the opinion of all the agents on this candidate is less than $\al_{i_1i_2}$.

      \item\label{item:part2} For any other $i'_1,i'_2\in[m]$ with $\{i'_1,i'_2\}\in E$, we have $\al_{i_1i_2}=\al_{i'_1i'_2}$, i.e.,\ the minimum of the list is independent of the choice of the communicating pair.
      \item\label{item:part3} If for some $j\in [n]$ and some $i\in [m]$, we have $X_{ij}(\infty)>\al_{i_1i_2}$, then $j\in S_{i'_1i'_2}$ for all $\{i'_1,i'_2\}\in E$.
    \end{enumerate}
  \end{lemma}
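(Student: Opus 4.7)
My plan is to establish the four parts in cascade: (a) directly from Theorem~\ref{thrm:infiniteflow}; then a restricted form of (b) that only controls $i\in\{i_1,i_2\}$; then (c); and finally the full (b) and (d) by propagating the resulting threshold along the connected graph $G$. For (a), the definitions in~\eqref{eqn:Ejdef} and of $S^\infty_{i_1i_2}$ show that on every sample path $j\in S^\infty_{i_1i_2}$ is equivalent to $\{i_1,i_2\}\in\Ej$; hence $i_1$ and $i_2$ lie in the same connected component of $\Gj$ and Theorem~\ref{thrm:infiniteflow} yields $X_{i_1j}(\infty)=X_{i_2j}(\infty)$. For the restricted form of (b), I would argue by contradiction: if $X_{i_1j_0}(\infty)>\alpha_{i_1i_2}$ for some $j_0\notin S^\infty_{i_1i_2}$, pick $j^*\in S^\infty_{i_1i_2}$ with $X_{i_1j^*}(\infty)=\alpha_{i_1i_2}$. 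Since $j^*\in S^\infty_{i_1i_2}$, there is an infinite subsequence $\{s_m\}$ on which $j^*\in T_k(X_{i_r}(\beta_{i_1i_2}(s_m)))$ for some fixed $r\in\{1,2\}$. Using convergence of $X(t)$ from Theorem~\ref{thrm:mainconvresult} together with (a) to transfer $X_{i_1j^*}(\infty)$ to $X_{i_2j^*}(\infty)$, one obtains the strict gap $X_{i_rj_0}(\beta_{i_1i_2}(s_m))>X_{i_rj^*}(\beta_{i_1i_2}(s_m))$ for large $m$; coordinate-monotonicity of $T_k(\cdot)$ then forces $j_0\in T_k(X_{i_r}(\beta_{i_1i_2}(s_m)))$ for infinitely many $m$, so $j_0\in S^\infty_{i_1i_2}$, contradicting the choice of $j_0$.

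For (c), given (a) and the restricted (b), I would identify $\alpha_{i_1i_2}$ with the $k$-th order statistic of $X_{i_1}(\infty)$. Indeed, (a) forces $X_{i_1}(\infty)$ and $X_{i_2}(\infty)$ to agree on $S^\infty_{i_1i_2}$ while the restricted (b) bounds both vectors outside that set by $\alpha_{i_1i_2}$; a short symmetry argument then shows $T_k(X_{i_1}(\infty))=T_k(X_{i_2}(\infty))$ (an index in the symmetric difference would yield a strict violation of one of (a) or the bound from (b)), and hence the two vectors share the same $k$-th order statistic, equal to $\alpha_{i_1i_2}$. This common value is thus invariant across every edge of $G$, and since $G$ is connected on the sample path under consideration, it is common to every pair in $E$; let $\alpha^*$ denote this universal value.

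For the full version of (b), suppose $X_{ij_0}(\infty)>\alpha^*$ for some $i\in[m]$ and $j_0\notin S^\infty_{i_1i_2}$. Walk along any path in $G$ from $i_1$ to $i$ and let $\{i_{\ell-1},i_\ell\}$ be the first edge along which $X_{\cdot j_0}(\infty)$ crosses $\alpha^*$; the starting value $X_{i_1j_0}(\infty)\le\alpha^*$ is provided by the restricted (b) for the pair $\{i_1,i_2\}$. If $j_0\in S^\infty_{i_{\ell-1}i_\ell}$, part (a) applied to that pair contradicts the jump; if $j_0\notin S^\infty_{i_{\ell-1}i_\ell}$, the restricted (b) applied to $\{i_{\ell-1},i_\ell\}$ together with (c) gives $X_{i_\ell j_0}(\infty)\le\alpha_{i_{\ell-1}i_\ell}=\alpha^*$, again contradicting the jump. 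Part (d) then follows by contraposition: if $X_{ij}(\infty)>\alpha^*$ but $j\notin S^\infty_{i'_1i'_2}$ for some $\{i'_1,i'_2\}\in E$, then the full (b) applied to $\{i'_1,i'_2\}$ would force $X_{ij}(\infty)\le\alpha^*$, a contradiction. The hardest part I anticipate is the bookkeeping in the restricted (b): ties at the $k$-th order statistic of $X_{i_1}(\infty)$ or $X_{i_2}(\infty)$ can make the implication ``strict dominance of $X_{i_rj_0}$ over $X_{i_rj^*}$ promotes $j_0$ into $T_k$'' delicate, and the case where $j^*$ lies in the top-$k$ only of the partner agent requires carefully transferring the controlling bound between $i_1$ and $i_2$ through (a).
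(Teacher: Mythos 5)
Your architecture mirrors the paper's: (a) from Theorem~\ref{thrm:infiniteflow}; a two-agent version of (b); then (c) by showing edge-invariance of $\al_{i_1i_2}$ and propagating over the connected graph $G$; the full (b) by walking along paths in $G$; and (d) by contraposition. Your first-crossing-edge propagation and your identification of $\al_{i_1i_2}$ with the $k$-th order statistic of $X_{i_1}(\infty)$ are actually cleaner than the paper's corresponding steps (its neighbor-of-$i_1$ argument in (b) and its direct comparison $\al_{i_1i_2}\le\al_{i_2i}$ in (c)).

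The step you flag but do not close is a genuine gap. Write $S^\infty_{i_r}$ for the set of items lying in $T_k(X_{i_r}(\beta_{i_1i_2}(s)))$ for infinitely many $s$, so that $S^\infty_{i_1i_2}=S^\infty_{i_1}\cup S^\infty_{i_2}$, and let $\al_r=\min_{j\in S^\infty_{i_r}}X_{i_rj}(\infty)$. Your pigeonhole only guarantees that the global minimizer $j^*$ lies in $S^\infty_{i_r}$ for one particular $r$; if $r=2$, the ``strict gap'' $X_{i_2j_0}(\beta_{i_1i_2}(s_m))>X_{i_2j^*}(\beta_{i_1i_2}(s_m))$ requires $X_{i_2j_0}(\infty)>\al_{i_1i_2}$, which you cannot deduce from $X_{i_1j_0}(\infty)>\al_{i_1i_2}$ because $j_0\notin S^\infty_{i_1i_2}$ and part (a) does not transfer $j_0$'s value between the two agents. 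The missing ingredient is the claim $\al_1=\al_2$. It follows from the observation that $j\in S^\infty_{i_r}$ forces at most $k-1$ coordinates of $X_{i_r}(\infty)$ to strictly exceed $X_{i_rj}(\infty)$ (otherwise $j$ drops out of $T_k(X_{i_r}(\cdot))$ permanently once the dynamics is close enough to its limit), while $|S^\infty_{i_{3-r}}|\ge k$ and all of its members carry a common value $\ge\al_{3-r}$ for both agents by (a); hence $\al_r\ge\al_{3-r}$ for each $r$. With $\al_1=\al_2=\al_{i_1i_2}$ in hand you can run your argument once for each $r$ with a minimizer $j^*_r\in S^\infty_{i_r}$, which yields the restricted (b) for both agents, and the same observation is what legitimizes the order-statistic identification underlying your (c). For what it is worth, the paper's own proof of this step asserts $X_{i_1\ell}(s)\ge X_{i_1j}(s)$ whenever $\ell\in S(\beta_{i_1i_2}(s))$, which is only valid when $\ell\in T_k(X_{i_1}(\beta_{i_1i_2}(s)))$ rather than in the partner's top-$k$, so it has exactly the lacuna you anticipated.
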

  \begin{proof}
    \eqref{item:consentonlist} \ Note that if $j\in S_{i_1i_2}^{\infty}$ it follows that $i_1$ and $i_2$
    talk about $j$ infinitely often. Then, by Theorem~\ref{thrm:infiniteflow} we have
    $X_{i_1j}(\infty)=X_{i_2j}(\infty)$, which implies that $\al_{i_1i_2}$ is well-defined. \\
    \eqref{item:part1} \ We first show that the claim holds for agents $i_1,i_2$. Let $\ell\in  S_{i_1i_2}^{\infty}$. Since $j\not\in S_{i_1i_2}^{\infty}$, it follows that after some $t_0\geq 0$, we have $j\not\in S(\beta_{i_1i_2}(t_0))$.
    Thus, at time instance $\beta_{i_1i_2}(s)$, $s\ge t_0$, when $i_1$ and $i_2$ talk about $\ell$ we should have $X_{i_1\ell}(s)\geq X_{i_1j}(s)$. Since $\lim_{t\to\infty}X(t)$ exists and also $X_{i_1\ell}(s)\geq X_{i_1j}(s)$ for infinitely many $s$, it follows that
    \begin{align}\nonumber
   X_{i_1\ell}(\infty) =  \lim_{t\to\infty}X_{i_1\ell}(t)
    \ge \lim_{t\to\infty}X_{i_1j}(t) = X_{i_1j}(\infty).
    \end{align}
    Now, consider an arbitrary neighbor $i$ of $i_1$ in $G$.
    Then, if $X_{ij}(\infty)>\alpha_{i_1i_2}$, and since $X_{i_1j}(\infty)\leq \alpha_{i_1i_2}$ then by Theorem~\ref{thrm:infiniteflow}, it follows that there is some time $t\geq 0$ such that $i_1$ and $i$ do not talk about $j$ after time $t$.
    Otherwise, by Theorem~\ref{thrm:infiniteflow}, it follows that $X_{i_1j}(\infty)=X_{ij}(\infty)$ which is contradiction. But this means that at any time instance such as $s\ge t$ that $i_1$ and $i$ talk, there exists some subset $S(s)$ of cardinality at least $k$ such that $X_{i \ell}(s)\ge X_{i j}(\infty)>\alpha_{i_1i_2}$. But this implies that there is a set of cardinality at least $k$, such that
    \[X_{i_1 \ell}(\infty)=X_{i\ell}(\infty)\ge X_{ij}(\infty)>\alpha_{i_1i_2}.\]
     This itself implies that there is a set $S$ of cardinality at least $k$ such that $i_1,i_2$ will talk about
   the items $\ell \in S$ infinitely often and
     \[X_{i_1\ell}(\infty)=X_{i_2\ell}(\infty)\ge \beta>\alpha_{i_1i_2}\quad\hbox{for all $\ell\in S$ },\]
     which contradicts with the fact that $\alpha_{i_1i_2}=\min_{\ell\in S^{\infty}}X_{i_1\ell}(\infty)$.

     Using a similar line of argument, we can show that for any neighbor $\gamma$ of $i$ in $G$, we have $X_{\gamma j}(\infty)\le \al_{i_1i_2}$. Since the graph $G$ is connected, we have $X_{ij}(\infty)\leq \al_{i_1i_2}$ for all $i\in [m]$.\\
    \eqref{item:part2} \  Let $i$ be an arbitrary neighbor of $i_2$ in $G$, other than $i_1$.  Then by Theorem~\ref{thrm:infiniteflow}, for $j\in S^{\infty}_{i_1i_2}\cap S^{\infty}_{i_2i}$, we have $X_{i_1j}(\infty)=X_{i_2j}(\infty)=X_{ij}(\infty)$ and for $j\in S^{\infty}_{i_1i_2}\setminus S^{\infty}_{i_2i}$ we have $X_{i_1j}(\infty)\leq \al_{i_2i}$. Thus, $\al_{i_1i_2}\leq \al_{i_2i}$. Using a similar argument we have $\al_{i_1i_2}\geq \al_{i_2i}$ and, hence $\al_{i_1i_2}=\al_{i_2i}$. Since the graph $G$ is connected, it follows that $\al_{i_1i_2}=\al_{i'_1i'_2}$ for all $\{i'_1,i'_2\}\in E$. \\
    \eqref{item:part3} \  This follows immediately from \eqref{item:part1} and \eqref{item:part2}.
  \end{proof}

  Based on Lemma~\ref{lemma:prelim}, we can prove our main claim
  that the agents consent on the top-$k$ aggregate list.
  \begin{theorem}\label{thrm:topk}
    Let $G=([m],E)$ be the graph with the edge set as given in~\eqref{eqn:infiniteoften}.
  Let $\{X(t)\}$ be a dynamics generated by the generalized top-$k$ selective gossiping model. Then, for any $\omega\in \{\mbox{$G$ is connected }\}$, there exists an $k'(\omega)\geq 1$ such that the society consents on the top-$k'$ aggregate ranking,
  i.e.\ $i_1\lra_j i_2$ for any $j\in T_{k'}(\bar{X}(0))$, where $\bar{X}(0)=\frac{1}{m}e^TX(0)$.
  \end{theorem}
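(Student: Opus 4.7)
The plan is to identify a threshold value $\alpha^{*}$, show that the candidates on which every agent eventually reaches consensus are exactly those whose aggregate average lies above this threshold, and then argue that this consensus set is a nonempty top slice of the aggregate profile $\bar X(0)$.

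By Lemma~\ref{lemma:prelim}(c), for every edge of $G$ the quantity $\alpha_{i_1 i_2}$ takes the same value; call it $\alpha^{*}$. I would define
\[
J := \{\, j \in [n] : \bar X_{j}(0) \ge \alpha^{*} \,\}
\]
and argue $J$ coincides with the set of candidates on which every pair of agents consents. For $j$ with $\bar X_{j}(0) > \alpha^{*}$, column-sum preservation forces some agent $i$ with $X_{ij}(\infty) > \alpha^{*}$, so Lemma~\ref{lemma:prelim}(d) places $j$ in $S^{\infty}_{i'_1i'_2}$ for every edge, and Lemma~\ref{lemma:prelim}(a) combined with connectivity of $G$ propagates the equality of $X_{ij}(\infty)$ to all pairs of agents. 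For $j$ with $\bar X_{j}(0) = \alpha^{*}$, the same reasoning in contrapositive rules out any agent strictly exceeding $\alpha^{*}$ on column $j$ (otherwise all agents would consent at that higher value, pushing the average above $\alpha^{*}$); hence $X_{ij}(\infty) \le \alpha^{*}$ for every $i$, and together with the column sum being $m\alpha^{*}$ this forces every entry to equal $\alpha^{*}$.

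Because any $j \notin J$ satisfies $\bar X_{j}(0) < \alpha^{*}$, the set $J$ collects precisely the top-$|J|$ aggregate scores, so $T_{|J|}(\bar X(0)) = J$; taking $k' := |J|$ then yields $i_1 \lra_{j} i_2$ for every $j \in T_{k'}(\bar X(0))$ and every pair $i_1,i_2$.

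The remaining step, and the main obstacle, is showing $|J| \ge 1$, i.e.\ $\max_{j} \bar X_{j}(0) \ge \alpha^{*}$. I would argue by contradiction. Assume $\bar X_{j}(0) < \alpha^{*}$ for all $j$. By the analysis above, no agent can strictly exceed $\alpha^{*}$ in the limit, so $X_{ij}(\infty) \le \alpha^{*}$ everywhere; Lemma~\ref{lemma:prelim}(b) together with the definition of $\alpha_{i_1 i_2}$ then pins $X_{i_1 j}(\infty) = X_{i_2 j}(\infty) = \alpha^{*}$ for every $j \in S^{\infty}_{i_1i_2}$, and recall $|S^{\infty}_{i_1i_2}| \ge k \ge 1$. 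The goal is to exhibit a single candidate $j^{*}$ belonging to $S^{\infty}_{i i'}$ for every edge of a spanning tree of $G$, so that repeated application of Lemma~\ref{lemma:prelim}(a) along the tree propagates the value $\alpha^{*}$ to every agent on column $j^{*}$ and yields $\bar X_{j^{*}}(0) = \alpha^{*}$, contradicting the hypothesis. The delicate point is that different edges need not share the same $S^{\infty}$, so producing $j^{*}$ requires combining the finiteness of $[n]$, the asymptotic rigidity of each agent's top-$k$ list under the selection rule, and the pigeonhole-style lower bound $|S^{\infty}_{ii'}| \ge k$; pushing this combinatorial argument through is the main technical hurdle.
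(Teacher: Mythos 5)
Your argument is essentially identical to the paper's own proof: the paper likewise fixes $\al=\al_{i_1i_2}(\omega)$ (well defined across edges by Lemma~\ref{lemma:prelim}(c)), sets $Q=\{j\in[n]\mid \bar{X}_j(0)\geq\al\}$, treats the cases $\bar{X}_j(0)>\al$ and $\bar{X}_j(0)=\al$ exactly as you do via preservation of the column averages together with Lemma~\ref{lemma:prelim}(d) and connectivity, and takes $k'=|Q|$. The one step you single out as the main technical hurdle --- showing $|J|\geq 1$, i.e.\ $\max_j\bar{X}_j(0)\geq\al$ --- is not addressed in the paper at all: its proof simply ends with the assertion that the society consents on the top-$|Q|$ aggregate ranking, without verifying that $Q$ is nonempty, even though the theorem explicitly claims $k'(\omega)\geq 1$. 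So you have reproduced the published argument in full, and in flagging the nonemptiness of $J$ you have identified a gap that the paper shares rather than a shortfall specific to your approach.
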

  \begin{proof}
   Fix a sample point $\omega\in \{\mbox{$G$ is connected}\}$. Let $\al=\al_{i_1i_2}(\omega)$ for some $\{i_1,i_2\}\in E$ and $Q=\{j\in[n]\mid \bar{X}_{j}(0)\geq\al\}$. We first prove that for any candidate $j\in Q$, we have consensus in society on $j$, i.e.\ $i_1\lra_j i_2$ for any $i_1,i_2\in[m]$.

   Note that if for some $j\in [n]$, $\bar{X}_j(0)>\al$, then since the average of $X(t)$ is preserved throughout the time, we should have $X_{ij}(\infty)>\al$ for some $i\in [m]$. By Lemma~\ref{lemma:prelim}-\eqref{item:part3} it follows that $j\in \cap_{\{i_1,i_2\}\in E}S^{\infty}_{i_1i_2}$. Thus, $i_1\lra_j i_2$ for all $i_1,i_2\in[m]$ and hence, the society consents on item $j$.

   If for some $j\in [n]$, $\bar{X}_j(0)=\al$ and agents do not consent on item $j$, then since the average is preserved throughout the dynamics, it follows that there exists some $i\in [m]$ such that $X_{ij}(\infty)>\al$. By a similar argument as in the previous case, this implies that the society consent on item $j$.

   Thus, it follows that the society consent on the top-$|Q|$ aggregate ranking of the society and hence, the result follows immediately.
  \end{proof}

 \section{Conclusion}\label{sec:conclusion}
 In this work we presented a general dynamics for voters' opinion dynamics in a time-varying random network and proved the convergence of such dynamics. Based on the proposed diffusion model, we provided generalizations to the gossip model and top-$k$ selective gossiping model, and discussed the convergence implications for these models. Many questions are left to be answered. Among them, the convergence rate analysis of these dynamics for specific selection processes. For example, the study of binomial selection process introduced in Section~\ref{sec:generaldynamics} is of particular interest.
\bibliographystyle{plain}
\bibliography{distributedagg}
\end{document}